\newcommand{\p}{\partial}
\newcommand{\todo}[1][\null]{\ensuremath{\clubsuit}}
\newcommand{\noprint}[1]{}
\newtheorem{theorem}{Theorem}
\newtheorem{corollary}{Corollary}
\newtheorem{proposition}{Proposition}
{
\theoremstyle{definition}

\newtheorem*{remark*}{Remark}
}
\newcommand{\lsemioplus}{\mathbin{\mbox{$\lefteqn{\hspace{.77ex}\rule{.4pt}{1.2ex}}{\in}$}}}
\newcommand{\checked}[1][\null]{\ensuremath{\boldsymbol{\surd}}}
\newcommand{\PP}{\mathcal{P}}
\newcommand{\ZZ}{\mathcal{Z}}
\newcommand{\DDD}{\mathcal{D}}
\newcommand{\JJJ}{\mathcal{J}}
\newcommand{\sbve}{sBVE}
\begin{document}

\par\noindent {\LARGE\bf
Complete point symmetry group of\\ the barotropic vorticity equation on a rotating sphere
\par}

{\vspace{4mm}\par\noindent {\bf Elsa Dos Santos Cardoso-Bihlo$^\dag$ and Roman O.\ Popovych$^{\dag\, \ddag}$
} \par\vspace{2mm}\par}

{\vspace{2mm}\par\noindent {\it
$^{\dag}$~Wolfgang Pauli Institute, Nordbergstra{\ss}e 15, A-1090 Vienna, Austria
}}

{\vspace{2mm}\par\noindent {\it
$^{\ddag}$~Institute of Mathematics of NAS of Ukraine, 3 Tereshchenkivska Str., 01601 Kyiv, Ukraine
}\par}

{\vspace{2mm}\par\noindent 
$\phantom{^\dag}$~\textup{E-mail}:
$^{\dag}$elsa.cardoso@univie.ac.at,
$^{\ddag}$rop@imath.kiev.ua
\par}

\vspace{4mm}\par\noindent\hspace*{5mm}\parbox{150mm}{\small
The complete point symmetry group of the barotropic vorticity equation on the sphere is determined.
The method we use relies on the invariance of megaideals of the maximal Lie invariance algebra of a system of differential equations under automorphisms generated by the associated point symmetry group.
A convenient set of megaideals is found for the maximal Lie invariance algebra of the spherical vorticity equation.
We prove that there are only two independent (up to composition with continuous point symmetry transformations) discrete symmetries for this equation.
}\par\vspace{2mm}

\section{Introduction}

\looseness=-1
Lie symmetries of the inviscid barotropic vorticity equation on the rotating sphere (\sbve) were computed in~\cite{bihl09Ay} and used in~\cite{bihl09Ay,bihl12Ay}
in order to derive a point transformation mapping the equation in a rotating reference frame to the equation in the rest frame and to construct exact solutions.
Therein it was also indicated that besides an infinite-dimensional maximal Lie invariance group
the barotropic vorticity equation on the sphere admits two independent (up to composition with each other and with continuous transformations) discrete symmetries,
which merely alternate signs of two pairs of variables, (time, longitude) and (latitude, stream function), respectively.
However, no systematic derivation of discrete symmetries or, more generally, the complete point symmetry group of the \sbve\ has been given in the literature up to now.

With the present paper, we aim to complete the description of point symmetries admitted by the \sbve\ by computing the complete point symmetry group~$G_\Omega$ of this equation.
We simplify the computation within the framework of the direct method by combining it with an advanced version of the algebraic approach originally proposed in~\cite{hydo00By,hydo00Ay}, essentially modified in~\cite{bihl11Cy} and then developed in~\cite{bihl11Dy,bihl11By}.
As a result, we prove that in fact the group~$G_\Omega$ is generated by Lie symmetry transformations of the \sbve\ and the previously mentioned two discrete transformations.

\looseness=-1
The \sbve\ is an appropriate equation to demonstrate the advantages of the enhanced version of the algebraic approach.
In principle, the group~$G_\Omega$ might be computed using merely the direct method based on
the definition of the set of transformations to be found and the prolongation of finite transformations to derivatives by the chain rule.
This approach is widely applied in the literature for finding complete point symmetry groups of single systems of differential equations
or equivalence groups and equivalence groupoids (i.e., sets of all admissible point transformations) of classes of such systems, 
see, e.g., \cite{king91Ay,king98Ay,king01Ay,popo12Ay,popo10Ay,vane07Ay,vane09Ay} and references therein.
At the same time, the \sbve\ is a third-order nonlinear partial differential equation for a single scalar function of three independent variables,
and one of the independent variables explicitly appears in the equation.
Hence the mere application of the direct method for the computation of the group~$G_\Omega$ is too cumbersome.
Moreover, the \sbve\ admits an infinite-dimensional maximal Lie invariance algebra,
and therefore the version of the algebraic method proposed in~\cite{hydo00By,hydo00Ay} is not applicable here,
because it relies on the computation of automorphism matrices,
which properly works only in the finite-dimensional case.
From the physical point of view, the \sbve\ is of superior importance
as it is capable of describing qualitatively the large-scale behavior of the flow in the middle of the troposphere.
Due to its relevance for the larger atmospheric scales,
it is especially convenient to consider the vorticity equation in spherical coordinates.

Our paper is organized in the following way:
In Section~\ref{sec:TheModel} we recall the known point symmetries of the \sbve.
In Section~\ref{sec:DiscreteSymmetries} we describe a method that can be applied to determine the complete point symmetry group of a system of differential equations
possessing a nontrivial Lie invariance algebra.
This method involves the notion of megaideals of Lie algebras and properly works even for systems whose maximal Lie invariance algebras are infinite dimensional.
Section~\ref{sec:CompletePointSymmetryGroup} is central.
After determining a convenient set of megaideals for the maximal Lie invariance algebra of the vorticity equation on the sphere,
we derive a maximal system of constraints for elements of the group~$G_\Omega$, which are related to properties of the adjoint action of~$G_\Omega$ on its Lie algebra,
and then complete the computation of~$G_\Omega$ by the direct method using the constraints derived within the framework of the algebraic approach.
We briefly sum up our results in the conclusion.

\section{The model}\label{sec:TheModel}

Introducing the stream function in the system of nonlinear incompressible Euler equations in a single thin atmospheric layer on the sphere 
leads to a single third-order nonlinear partial differential equation for the stream function, which is referred to as the barotropic vorticity equation on the sphere. It reads
\begin{equation}\label{eq:VorticityEquationSphere}
 \zeta_t + \psi_\lambda\zeta_\mu - \psi_\mu\zeta_\lambda + 2\Omega\psi_\lambda = 0,\qquad
 \zeta := \frac{1}{1-\mu^2}\,\psi_{\lambda\lambda}+((1-\mu^2)\psi_\mu)_\mu,
\end{equation}
where $\lambda$ and $\varphi$ are the longitude and latitude, respectively, and $\mu=\sin\varphi$, $\psi$ is the stream function generating an incompressible two-dimensional flow on the sphere, which is related to the vorticity~$\zeta$ by means of the Laplacian on the sphere and $\Omega$ is the constant angular velocity of the rotating sphere. The derived latitudinal variable $\mu$ runs from $-1$ (South Pole) to $1$ (North Pole). For convenience, we assume the mean radius of the Earth to be scaled to one.

It was shown in~\cite{bihl09Ay,bihl12Ay} that Eq.~\eqref{eq:VorticityEquationSphere} admits the infinite-dimensional maximal Lie invariance algebra, which is denoted as $\mathcal S^\infty_\Omega$. This algebra is generated by the vector fields
\begin{align*}
     &\DDD = t\p_t - (\psi-\Omega\mu)\p_{\psi} - \Omega t\p_{\lambda}, \qquad \PP=\p_t, \qquad \ZZ(g) = g(t)\p_{\psi},\qquad \JJJ_{1} = \p_{\lambda}, \\
     &\JJJ_{2} = \mu\frac{\sin(\lambda+\Omega t)}{\sqrt{1-\mu^2}}\p_{\lambda}+\frac{\cos(\lambda+\Omega t)}{\sqrt{1-\mu^2}} \left((1-\mu^2)\p_{\mu} + \Omega\p_{\psi}\right), \\
     &\JJJ_{3} =\mu \frac{\cos(\lambda+\Omega t)}{\sqrt{1-\mu^2}}\p_{\lambda}-\frac{\sin(\lambda+\Omega t)}{\sqrt{1-\mu^2}} \left((1-\mu^2)\p_{\mu} + \Omega\p_{\psi}\right),
\end{align*}
where the parameter-function $g$ traverses the set of smooth functions of $t$. The structure of the algebra $\mathcal S^\infty_\Omega$ is  $\mathfrak{so}(3)\oplus(\mathfrak{g}_2\lsemioplus \langle\ZZ(g)\rangle)$, where the three-dimensional orthogonal algebra $\mathfrak{so}(3)$ is realized by the vector fields $\JJJ_i$, $i=1,2,3$, $\mathfrak{g}_2 = \langle \DDD, \PP\rangle$ is a realization of the two-dimensional non-Abelian algebra and $\langle\ZZ(g)\rangle$ is an infinite-dimensional Abelian ideal in $\mathcal{S}_\Omega^\infty$.

An important property of the above family of Lie invariance algebras parameterized by the angular velocity~$\Omega$ is that it is not singular with respect to the parameter~$\Omega$ at $\Omega=0$, i.e.\ it includes the case of the rest reference frame, and it is natural to denote the maximal Lie invariance algebra of Eq.~\eqref{eq:VorticityEquationSphere} with $\Omega=0$ by $\mathcal S^\infty_0$.
It was shown in~\cite{bihl09Ay,bihl12Ay} that $\Omega$ can be set to zero in the algebra $\mathcal S^\infty_\Omega$ by means of the transformation
\begin{equation}\label{eq:vorttrans}
    \tilde t = t, \quad \tilde\mu = \mu, \quad \tilde\lambda = \lambda + \Omega t, \quad \tilde\psi = \psi - \Omega\mu.
\end{equation}
The same transformation also allows one to set $\Omega$ to zero in the vorticity equation~\eqref{eq:VorticityEquationSphere}. Note that the transformation~\eqref{eq:vorttrans} was originally derived in~\cite{plat60Ay}, where it was used to transform the vorticity equation into a reference frame with vanishing angular momentum. The transformation~\eqref{eq:vorttrans} can also be found by noting that the algebras $\mathcal S^\infty_\Omega$ and $\mathcal S^\infty_0$ are isomorphic and by constructing the mapping relating these two algebras. For our purpose this transformation is especially convenient as it leads to a simplified form of both the vorticity equation~\eqref{eq:VorticityEquationSphere} and the maximal Lie invariance algebra~$\mathcal S^\infty_\Omega$ (i.e.\ we can work with $\mathcal S^\infty_0$).
Moreover, setting $\Omega$ to zero makes clear the physical meaning of basis elements of the algebra~$\mathcal S^\infty_\Omega$.
If $\Omega=0$, the vector field $\DDD$ generates simultaneous scalings in $t$ and $\psi$ and the vector field $\PP$ is associated with time translations.
The elements of the form $\ZZ(g)$ are the infinitesimal counterparts of gauging of the stream function up to a summand being a smooth function of~$t$.
The vector fields~$\JJJ_i$, $i=1,2,3$, generate rotations represented in angular coordinates.

It was claimed in~\cite{bihl09Ay,bihl12Ay} that in addition to continuous symmetries generated by elements from the maximal Lie invariance algebra~$\mathcal S^\infty_\Omega$ (or, equivalently, $\mathcal S^\infty_0$), there are two discrete symmetries admitted by Eq.~\eqref{eq:VorticityEquationSphere} and these are given by the changes of the signs, $(t,\lambda,\mu,\psi)\mapsto(-t,-\lambda,\mu,\psi)$ and $(t,\lambda,\mu,\psi)\mapsto(t,\lambda,-\mu,-\psi)$, respectively. While it is straightforward to check by direct substitution that these transformations are indeed symmetries of~\eqref{eq:VorticityEquationSphere}, it is more elaborate to derive them directly from the invariance criterion. This was not done in~\cite{bihl09Ay,bihl12Ay}. It is even harder to prove that there are no other independent (up to composition with each other and with continuous symmetry transformations) discrete symmetries than these two mirror symmetries. It is the purpose of this paper to show by determining the complete point symmetry group of the \sbve\ that there are indeed only these two discrete symmetries.

\section{How to find the complete point symmetry group\\ via the algebraic method}\label{sec:DiscreteSymmetries}

It is considerably more difficult to find discrete point symmetries of a system of differential equations than its Lie symmetries. The reason for this is that the powerful infinitesimal symmetry criterion is only applicable for transformations depending on continuous group parameters~\cite{blum89Ay,hydo00Ay,olve86Ay}. This is also the reason why to date no existing computer algebra package, such as~\cite{head93Ay,roch11Ay,vu12Ay} can be used for this purpose, because all such packages rely on the integration of the infinitesimal determining equations, which by definition only exist for Lie symmetries and are linear. Finding discrete point symmetries or the complete point symmetry group of a system of differential equations therefore has in fact to be done by hand using computer programs only for related routine calculations in interactive mode.

Before we start the computation of the complete point symmetry group~$G_0$ for Eq.~\eqref{eq:VorticityEquationSphere} with $\Omega=0$,
let us recall about the general method proposed in~\cite{bihl11Cy},
which in some sense can be seen as a refinement of the technique suggested in~\cite{hydo00By,hydo00Ay} by involving the notion of megaideals~\cite{popo03Ay}.

Namely, we use the following property:
\emph{Given a system of differential equations~$\mathcal L$,
for any transformation~$\mathcal T$ from the maximal point symmetry (pseudo)group~$G$ of the system~$\mathcal L$
the linear mapping~$\mathcal T_*\colon\mathfrak g\to\mathfrak g$ generated by~$\mathcal T$
on the maximal Lie invariance algebra~$\mathfrak g$ of the system~$\mathcal L$
via push-forwarding of vector fields in the space of system variables
is an automorphism of~$\mathfrak g$, $\mathcal T_*\in\mathrm{Aut}(\mathfrak g)$, and hence it preserves all megaideals of~$\mathfrak g$.}

The correspondence $\mathcal T\to\mathcal T_*$ defines a representation of~$G$ on~$\mathfrak g$, which is often unfaithful.
In particular, this is the case if the group~$G$ (resp.\ the algebra~$\mathfrak g$) has a nontrivial center.
The representation image~$G_*$ is a subgroup of the automorphism group $\mathrm{Aut}(\mathfrak g)$, which may be smaller than the entire group $\mathrm{Aut}(\mathfrak g)$.
The continuous point symmetries of the system~$\mathcal L$ generate, via finite compositions, a connected normal subgroup~$U$ of~$G$ and induce mappings on~$\mathfrak g$
which can be considered as internal automorphisms of~$\mathfrak g$
and which generate a normal subgroup of $\mathrm{Aut}(\mathfrak g)$.
We denote this subgroup by $\mathrm{Int}(\mathfrak g)$.
Elements of the factor group $G/U$ (more precisely, their representatives in~$G$)
are interpreted as ``discrete symmetries'' of the system~$\mathcal L$ which are independent up to composing with continuous symmetries of~$\mathcal L$.
As the representations of~$G$ and~$U$ on~$\mathfrak g$ via push-forwarding of vector fields are not necessarily faithful,
there is no assurance on the existence of a bijection between the factor groups $G/U$ and $\mathrm{Aut}(\mathfrak g)/\mathrm{Int}(\mathfrak g)$.
Rigorous consideration gives rise to a number of difficult problems
which concern the relation of algebras of vector fields and (pseudo)groups of transformations in the infinite-dimensional case
and which are beyond the subject of this paper.
At the same time, these problems can be easily solved in particular cases, e.g., related to models of fluid dynamics and meteorology.

If the algebra $\mathfrak g$ is not low dimensional then the computation of $\mathrm{Aut}(\mathfrak g)$ itself may be a complicated problem.
Moreover, the group $\mathrm{Aut}(\mathfrak g)$ may be much wider than~$G_*$, especially if the algebra $\mathfrak g$ is infinite dimensional.
If this is the case, in the course of the construction of $\mathrm{Aut}(\mathfrak g)$ we will spend efforts for finding elements from $\mathrm{Aut}(\mathfrak g)\setminus G_*$,
which are in fact needless for determining $G_*$.
To avoid such needless computations, instead of the condition $G_*\subseteq\mathrm{Aut}(\mathfrak g)$
we can use the weaker condition that $G_*\mathfrak i\subseteq\mathfrak i$ if $\mathfrak i$ is a megaideal of~$\mathfrak g$.

In general, a \emph{megaideal} $\mathfrak i$ of a Lie algebra~$\mathfrak g$ is a vector subspace of $\mathfrak g$ that is invariant under any mapping from the automorphism group $\mathrm{Aut}(\mathfrak g)$ of~$\mathfrak g$ \cite{bihl11Cy,popo03Ay}, i.e., $\mathfrak Tz=z$ for any $z\in\mathfrak i$ and any $\mathfrak T\in\mathrm{Aut}(\mathfrak g)$.
Every megaideal of~$\mathfrak g$ is an ideal and a characteristic ideal of~$\mathfrak g$.
A set of megaideals of~$\mathfrak g$ can be computed without knowing of $\mathrm{Aut}(\mathfrak g)$.
Both the improper subalgebras of~$\mathfrak g$ (the zero subspace and $\mathfrak g$ itself) are (improper) megaideals of~$\mathfrak g$.
If $\mathfrak i_1$ and $\mathfrak i_2$ are megaideals of~$\mathfrak g$ then so are $\mathfrak i_1+\mathfrak i_2,$ $\mathfrak i_1\cap \mathfrak i_2$ and $[\mathfrak i_1,\mathfrak i_2]$,
i.e., sums, intersections and Lie products of megaideals are again megaideals.
If $\mathfrak i_2$ is a megaideal of $\mathfrak i_1$ and $\mathfrak i_1$ is a megaideal of $\mathfrak g$ then $\mathfrak i_2$ is a megaideal of $\mathfrak g$, i.e., megaideals of megaideals are also megaideals.
All elements of the derived series and upper and lower central series of~$\mathfrak g$, including the center and the derivative of~$\mathfrak g$, as well as the radical and nil-radical of~$\mathfrak g$ are its megaideals.
In order to have a sufficient store of megaideals, we need one more way to find new megaideals from known ones.

\begin{proposition}\label{pro:WayToFindMegaideals}
If~$\mathfrak i_0$, $\mathfrak i_1$ and~$\mathfrak i_2$ are megaideals of~$\mathfrak g$
then the set~$\mathfrak s$ of elements from~$\mathfrak i_0$ whose commutators with arbitrary elements from~$\mathfrak i_1$ belong to~$\mathfrak i_2$
is also a megaideal of~$\mathfrak g$.
\end{proposition}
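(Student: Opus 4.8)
The plan is to unwind the definition and verify directly the two defining properties of a megaideal: that
\[
\mathfrak s=\{x\in\mathfrak i_0 : [x,y]\in\mathfrak i_2 \ \text{for all}\ y\in\mathfrak i_1\}
\]
is a vector subspace of~$\mathfrak g$, and that it is invariant under every automorphism $\mathfrak T\in\mathrm{Aut}(\mathfrak g)$. The subspace property is immediate from the bilinearity of the Lie bracket together with the fact that $\mathfrak i_0$ and $\mathfrak i_2$ are themselves subspaces: for $x,x'\in\mathfrak s$ and a scalar~$\alpha$ one has $x+\alpha x'\in\mathfrak i_0$, and for every $y\in\mathfrak i_1$ the bracket $[x+\alpha x',y]=[x,y]+\alpha[x',y]$ lies in~$\mathfrak i_2$, so $x+\alpha x'\in\mathfrak s$.

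The substance of the argument is the invariance under $\mathrm{Aut}(\mathfrak g)$, and here the key idea is to exploit that an automorphism is invertible and that megaideals are preserved by \emph{all} automorphisms (hence by $\mathfrak T^{-1}$ as well, so that in fact $\mathfrak T(\mathfrak i)=\mathfrak i$ for each megaideal~$\mathfrak i$). Fix $\mathfrak T\in\mathrm{Aut}(\mathfrak g)$ and $x\in\mathfrak s$; I want to show $\mathfrak Tx\in\mathfrak s$. First, $\mathfrak Tx\in\mathfrak i_0$ because $\mathfrak i_0$ is a megaideal. Next, to check the commutator condition, I would take an arbitrary $y\in\mathfrak i_1$ and write it as $y=\mathfrak T(\mathfrak T^{-1}y)$ with $\mathfrak T^{-1}y\in\mathfrak i_1$, which is legitimate since $\mathfrak i_1$ is a megaideal and hence invariant under $\mathfrak T^{-1}$. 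Using that $\mathfrak T$ respects the bracket, I then compute
\[
[\mathfrak Tx,y]=[\mathfrak Tx,\mathfrak T(\mathfrak T^{-1}y)]=\mathfrak T\bigl[x,\mathfrak T^{-1}y\bigr].
\]
Since $x\in\mathfrak s$ and $\mathfrak T^{-1}y\in\mathfrak i_1$, the inner bracket $[x,\mathfrak T^{-1}y]$ lies in~$\mathfrak i_2$; applying the megaideal property of~$\mathfrak i_2$ to~$\mathfrak T$ gives $\mathfrak T[x,\mathfrak T^{-1}y]\in\mathfrak i_2$, whence $[\mathfrak Tx,y]\in\mathfrak i_2$. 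As $y\in\mathfrak i_1$ was arbitrary, this yields $\mathfrak Tx\in\mathfrak s$, completing the verification.

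The only genuinely delicate point I expect is the direction in which the automorphism is applied. A naive attempt acting with~$\mathfrak T$ directly on a fixed $y\in\mathfrak i_1$ would only control brackets of the form $[\mathfrak Tx,\mathfrak Ty]$, and one must observe that this still suffices precisely because $\mathfrak T$ maps~$\mathfrak i_1$ onto all of~$\mathfrak i_1$; phrasing the step through $\mathfrak T^{-1}y$ makes this transparent and avoids any hidden appeal to surjectivity onto~$\mathfrak i_1$. Everything else is bookkeeping with the subspace and homomorphism properties, so no separate estimate or structural input about~$\mathfrak g$ is required.
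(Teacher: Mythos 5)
Your proof is correct and follows essentially the same route as the paper's: the key step in both is to write an arbitrary $y\in\mathfrak i_1$ as $\mathfrak T(\mathfrak T^{-1}y)$, using invariance of~$\mathfrak i_1$ under~$\mathfrak T^{-1}$ and of~$\mathfrak i_2$ under~$\mathfrak T$, so that $[\mathfrak Tx,y]=\mathfrak T[x,\mathfrak T^{-1}y]\in\mathfrak i_2$. You are merely a little more explicit than the paper in spelling out the subspace verification and the containment $\mathfrak Tx\in\mathfrak i_0$.
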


\begin{proof}
It is obvious that $\mathfrak s$ is a linear subspace of~$\mathfrak g$.
Consider an element $z_0\in\mathfrak i_0$ such that $[z_0,z_1]\in\mathfrak i_2$ for arbitrary $z_1\in\mathfrak i_1$.
Then for arbitrary $\mathfrak T\in\mathrm{Aut}(\mathfrak g)$ and arbitrary $z_1\in\mathfrak i_1$ we have
$[\mathfrak Tz_0,z_1]=[\mathfrak Tz_0,\mathfrak T\mathfrak T^{-1}z_1]=\mathfrak T[z_0,\mathfrak T^{-1}z_1]\in\mathfrak i_2$
as $\mathfrak T^{-1}z_1\in\mathfrak i_1$, and hence $[z_0,\mathfrak T^{-1}z_1]\in\mathfrak i_2$.
This means that $\mathfrak Tz_0\in\mathfrak s$, i.e., $\mathfrak s$ is a megaideal of~$\mathfrak g$.
\end{proof}

As the megaideals~$\mathfrak i_1$ and~$\mathfrak i_2$ are necessarily usual ideals and hence $[\mathfrak i_0,\mathfrak i_1]\subseteq\mathfrak i_0\cap\mathfrak i_1$,
it in fact suffices to consider the case when $\mathfrak i_2$ is contained in $\mathfrak i_0\cap\mathfrak i_1$.
If $\mathfrak i_0\cap\mathfrak i_1=\{0\}$, the megaideal~$\mathfrak s$ coincides with~$\mathfrak i_0$.
A particular case of Proposition~\ref{pro:WayToFindMegaideals} with $\mathfrak i_1=\mathfrak g$ and $\mathfrak i_2=\{0\}$ implies that
the centralizer of every megaideal is a megaideal.

Once a convenient set of megaideals is found, one should try to obtain maximal restrictions on the form of a point symmetry transformation~$\mathcal T$, which are feasible using the algebraic approach. If all derived restrictions on $\mathcal T$ are taken into account, one has to substitute the restricted form of a general point transformation into the initial system of differential equations and proceed the computation of the complete point symmetry group within the framework of the direct method.

\section{Computation of the complete point symmetry group}\label{sec:CompletePointSymmetryGroup}

We fix the value $\Omega=0$. It is straightforward to compute the following megaideals of~$\mathfrak g=\mathcal S^\infty_0$:
\begin{gather*}
\mathfrak g'=\langle\JJJ_1,\JJJ_2,\JJJ_3,\PP,\ZZ(g)\rangle,\quad
\mathfrak g''=\langle\JJJ_1,\JJJ_2,\JJJ_3,\ZZ(g)\rangle,\quad
\mathfrak g'''=\langle\JJJ_1,\JJJ_2,\JJJ_3\rangle,\\
\mathrm C_{\mathfrak g}(\mathfrak g')=\mathrm Z_{\mathfrak g'}=\langle\ZZ(1)\rangle,\quad
\mathrm C_{\mathfrak g}(\mathfrak g'')=\mathrm Z_{\mathfrak g''}=\langle\ZZ(g)\rangle,\quad
\mathrm C_{\mathfrak g}(\mathfrak g''')=\langle\DDD,\PP,\ZZ(g)\rangle,\\
(\mathrm C_{\mathfrak g}(\mathfrak g'''))'=\langle\PP,\ZZ(g)\rangle,
\end{gather*}
where $\mathfrak a'$, $\mathrm Z_{\mathfrak a}$ and $\mathrm C_{\mathfrak a}(\mathfrak b)$
denote the derivative and the center of a Lie algebra~$\mathfrak a$
and the centralizer of a subalgebra~$\mathfrak b$ in~$\mathfrak a$, respectively.

Now we apply Proposition~\ref{pro:WayToFindMegaideals} to the case $\mathfrak i_0=\mathfrak i_1=(\mathrm C_{\mathfrak g}(\mathfrak g'''))'$
and vary $\mathfrak i_2$.
If $\mathfrak i_2=\langle\ZZ(1)\rangle$, we obtain $\mathfrak s=\langle\ZZ(1),\ZZ(t)\rangle$ and hence this is a megaideal.
We reassign the last~$\mathfrak s$ as $\mathfrak i_2$ and iterate this procedure,
which results in the series of megaideals $\langle\ZZ(1),\ZZ(t),\dots,\ZZ(t^n)\rangle$, $n\in\mathbb N_0$.

Megaideals of~$\mathcal S^\infty_0$ that are sums of other megaideals are not essential
for the computation of the complete point symmetry group~$G_0$
of the vorticity equation~\eqref{eq:VorticityEquationSphere} for $\Omega=0$ by the algebraic method
since they give weaker constraints for components of point symmetry transformations than their summands.
Even if a megaideal~$\mathfrak i$ is not a sum of other megaideals, 
the condition $G_*\mathfrak i\subseteq\mathfrak i$ may imply only constraints
that are consequences of constraints derived in the course of the consideration of other megaideals.
In order to simplify the computation, we choose a minimal set of megaideals
that allow us to easily obtain a maximal set of constraints available within the algebraic framework.
We selected such megaideals from the above list:
\begin{equation}\label{eq:MegaIdealsForComputation}
\langle\ZZ(1)\rangle,\quad
\langle\ZZ(1),\ZZ(t)\rangle,\quad
\langle\PP,\ZZ(g)\rangle,\quad
\langle\JJJ_1,\JJJ_2,\JJJ_3\rangle.
\end{equation}

The general form of a point transformation that can be applied to the vorticity equation on the sphere~\eqref{eq:VorticityEquationSphere} with $\Omega=0$ is
\[
 \mathcal T\colon\quad (\tilde t, \tilde\lambda,\tilde\mu, \tilde \psi)=(T, \Lambda, \mathrm M, \Psi),
\]
where $T$, $\Lambda$, $\mathrm M$ and $\Psi$ are regarded as functions of $t$, $\lambda$, $\mu$ and $\psi$, whose joint Jacobian~$\mathrm J$ does not vanish.
To derive a constrained form of $\mathcal T$, we use the selected four megaideals~\eqref{eq:MegaIdealsForComputation} of~$\mathcal S^\infty_0$.
For the transformation~$\mathcal T$ to be qualified as a point symmetry of the vorticity equation on the sphere,
its counterpart~$\mathcal T_*$ push-forwarding vector fields should preserve each of these megaideals.
Moreover, for any megaideal~$\mathfrak m$ of~$\mathfrak g$ the mapping induced by~$\mathcal T$ on~$\mathfrak m$ is an automorphism of~$\mathfrak m$.
This property is convenient to use for finite-dimensional megaideals.
Thus, the megaideal $\langle\JJJ_1,\JJJ_2,\JJJ_3\rangle$ is isomorphic to the algebra  $\mathfrak{so}(3)$,
whose automorphism group is exhausted by internal automorphisms and hence isomorphic to the special orthogonal group $\mathrm{SO}(3)$.

As a result, we obtain the conditions
\begin{subequations}
\begin{align}
&\mathcal T_* \ZZ(1)=T_\psi\p_{\tilde t}+\Lambda_\psi\p_{\tilde \lambda}+\mathrm M_\psi \p_{\tilde \mu}+\Psi_\psi \p_{\tilde \psi}=c\tilde\ZZ(1),
\label{eq:MegaidealConstraintForT1}\\
&\mathcal T_* \ZZ(t)=t(T_\psi\p_{\tilde t}+\Lambda_\psi\p_{\tilde \lambda}+\mathrm M_\psi \p_{\tilde \mu}+\Psi_\psi\p_{\tilde\psi})
=d_1\tilde\ZZ(\tilde t)+d_0\tilde\ZZ(1),
\label{eq:MegaidealConstraintForT2}\\
&\mathcal T_* \PP=T_t\p_{\tilde t}+\Lambda_t\p_{\tilde \lambda}+\mathrm M_t \p_{\tilde \mu}+\Psi_t\p_{\tilde \psi}
=a_1\tilde\PP+\tilde\ZZ(\tilde g),
\label{eq:MegaidealConstraintForT3}\\
&\mathcal T_* \JJJ_i=\sum_{j=1}^3 b_{ij}\tilde \JJJ_j,\quad i=1,2,3,
\label{eq:MegaidealConstraintForT4}
\end{align}
\end{subequations}
where $\tilde g$ is a smooth function of~$\tilde t$ that is determined,
as the constant parameters $c$, $d_0$, $d_1$, $a_1$, $a_2$, $a_3$ and~$b_{ij}$,
by~$\mathcal T_*$ and the vector field from the corresponding left-hand side,
$(b_{ij})$ is a special orthogonal matrix, and $i,j=1,2,3$.

We will derive constraints on~$\mathcal T_*$ by sequentially equating the coefficients of vector fields
in the conditions \eqref{eq:MegaidealConstraintForT1}--\eqref{eq:MegaidealConstraintForT4}
and by taking into account the constraints obtained in previous steps.

Thus, condition~\eqref{eq:MegaidealConstraintForT1} directly implies that $T_\psi=\Lambda_\psi=\mathrm M_\psi=0$ and $\Psi_\psi=c$.
Then the last value is nonzero since the Jacobian~$\mathrm J$ does not vanish.
The equation $ct=d_1\tilde t+d_0$ derived from condition~\eqref{eq:MegaidealConstraintForT2}
gives that $d_1\ne0$ and hence the $t$-component of the transformation~$\mathcal T$ depends only on~$t$ and the dependence is affine,
$\tilde t=T(t)=cd_1^{-1}t-d_0d_1^{-1}$.
Condition~\eqref{eq:MegaidealConstraintForT3} is split into the equations $T_t=a_1$ (and hence $a_1=cd_1^{-1}\ne0$), $\Lambda_t=\mathrm M_t=0$ and $\Psi_t=\tilde g$.
Collecting coefficients of~$\p_{\tilde \psi}$ in condition~\eqref{eq:MegaidealConstraintForT4}, we obtain that $\Psi_\lambda=\Psi_\mu=0$.
The integration and arrangement of all the above equations for the components of~$\mathcal T$ results in the representation
\[
T=a_1t+a_0, \quad
\Lambda=\Lambda(\lambda,\mu), \quad
\mathrm M=\mathrm M(\lambda,\mu), \quad
\Psi=c\psi+f(t),
\]
where $a_1$, $a_0$ and $c$ are arbitrary constants with $a_1c\ne0$,
$f$ is an arbitrary smooth function of~$t$,
the pair of the smooth functions~$\Lambda$ and~$\mathrm M$ has nonvanishing Jacobian
and additionally satisfies equations implied by condition~\eqref{eq:MegaidealConstraintForT4}.
Up to internal automorphisms of the algebra~$\mathcal S^\infty_0$
which are generated by the rotation operators~$\JJJ_1$, $\JJJ_2$ and $\JJJ_3$,
we can set the matrix $(b_{ij})$ as the unit matrix.
Then we obtain the following system of equations with respect to the functions~$\Lambda$ and~$\mathrm M$:
\begin{subequations}
\begin{gather}\label{eq:OpsJConditions1}
\JJJ_1\Lambda=1,\quad \JJJ_2\Lambda=\frac{\mathrm M}{\sqrt{1-\mathrm M^2}}\sin\Lambda,\quad \JJJ_3\Lambda=\frac{\mathrm M}{\sqrt{1-\mathrm M^2}}\cos\Lambda,
\\\label{eq:OpsJConditions2}
\JJJ_1\mathrm M=0,\quad \JJJ_2\mathrm M=\sqrt{1-\mathrm M^2}\cos\Lambda,\quad \JJJ_3\mathrm M=-\sqrt{1-\mathrm M^2}\sin\Lambda,
\end{gather}
\end{subequations}

The equations $\JJJ_1\Lambda=1$ and $\JJJ_1\mathrm M=0$ imply that $\Lambda=\lambda+\Upsilon(\mu)$ and $\mathrm M=\mathrm M(\mu)$.
We substitute these expressions into the last two equations of~\eqref{eq:OpsJConditions2}
and split them with respect to~$\lambda$.
This gives the conditions $\sqrt{1-\mathrm M^2}\sin\Upsilon=0$ and $\sqrt{1-\mu^2}\mathrm M_\mu=\sqrt{1-\mathrm M^2}\cos\Upsilon$.
As $\mathrm M_\mu\ne0$, we have that $\sin\Upsilon=0$, i.e.\ $\Upsilon=\pi k$, where $k\in\mathbb Z$.
The same procedure applied to the last two equations of~\eqref{eq:OpsJConditions1} results in
the condition
\[
\frac\mu{\sqrt{1-\mu^2}}=\frac{(-1)^k\mathrm M}{\sqrt{1-\mathrm M^2}},
\]
which is equivalent to $\mathrm M=(-1)^k\mu$.
Then the equation $\sqrt{1-\mu^2}\mathrm M_\mu=\sqrt{1-\mathrm M^2}\cos\Upsilon$ is identically satisfied.

There are no more constraints which can be derived within the framework of the algebraic method.
The further consideration is based on the direct calculation of transformed derivatives,
which is quite easy since the expressions for the transformation components have already been specified.
Thus, the transformed left-hand side of the vorticity equation~\eqref{eq:VorticityEquationSphere} with $\Omega=0$,
\[
\tilde\zeta_{\tilde t}+ (\tilde\psi_{\tilde\lambda}\tilde\zeta_{\tilde\mu}-
\tilde\psi_{\tilde\mu}\tilde\zeta_{\tilde\lambda})=
\frac c{a_1}\zeta_t+(-1)^kc^2(\psi_\lambda\zeta_\mu-\psi_\mu\zeta_\lambda),
\]
identically vanishes for each solution of~\eqref{eq:VorticityEquationSphere}
if and only if $c=(-1)^k/a_1$.
This means that up to rotations, which are generated by vector fields from $\langle\JJJ_1,\JJJ_2,\JJJ_3\rangle$,
any transformation from the group~$G_0$ takes the form
\[
\tilde t=a_1t+a_0, \quad
\tilde\lambda=\lambda, \quad
\tilde\mu=\varepsilon\mu, \quad
\tilde\psi=\frac{\varepsilon}{a_1}\psi+f(t),
\]
where $a_0$ and $a_1$ are arbitrary constants with $a_1\ne0$, $\varepsilon=\pm1$
and $f$ is an arbitrary smooth function of~$t$.
(We neglect the shift of~$\lambda$ by~$\pi k$ as it is a rotation associated with $\JJJ_1$
and denote $(-1)^k$ by~$\varepsilon$.)
A transformation of the above form belongs to the connected component of the unity in~$G_0$ if and only if
$a_1>0$ and $\varepsilon=1$.
Therefore, there are only two discrete transformations in~$G_0$
that are independent up to combinations with each other and with continuous transformations.
These are, e.g., the transformations with $(a_1,\varepsilon)=(-1,1)$ and $(a_1,\varepsilon)=(1,-1)$,
where in both the cases we set $a_0=0$ and $f=0$,
which merely alternate the signs of the variables $\{t,\psi\}$ and $\{\mu,\psi\}$, respectively.

The transformation which alternates the signs of the variables~$\{\lambda,\mu\}$
is in fact not a discrete symmetry of the vorticity equation~\eqref{eq:VorticityEquationSphere} for $\Omega=0$
as it is the rotation by the angle~$\pi$
with respect to the axis corresponding to $\lambda=0$ and $\mu=0$.
The above symmetry transformations alternating signs of different sets of variables can be combined
in order to obtain other pairs of simple discrete transformations
that are independent of each other up to continuous transformations.
An example of such a pair is given by the transformations
merely alternating the signs of the variables $\{t,\lambda\}$ and $\{\mu,\psi\}$, respectively.
These transformations coincide with those stated in~\cite{bihl09Ay,bihl12Ay}.
This completes the description of the complete point symmetry group of the barotropic vorticity equation on the sphere with $\Omega=0$.

By use of the transformation~\eqref{eq:vorttrans} the above discrete transformations can also be transferred to discrete symmetries of the vorticity equation on a constantly rotating sphere.

Summing up the foregoing consideration, we obtain the following assertion.

\begin{theorem}
The complete point symmetry group of the barotropic vorticity equation on the sphere~\eqref{eq:VorticityEquationSphere}
is generated by one-parameter groups associated with vector fields from the algebra~$\mathcal S^\infty_\Omega$ and two discrete transformations, e.g.,
\[
(t,\lambda,\mu,\psi)\mapsto(-t,-\lambda,\mu,\psi)
\quad\mbox{and}\quad
(t,\lambda,\mu,\psi)\mapsto(t,\lambda,-\mu,-\psi).
\]
\end{theorem}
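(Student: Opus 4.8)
The plan is to combine the algebraic (megaideal) approach with the direct method, exactly splitting the labor so that the megaideals do the heavy lifting and the direct substitution only fixes a single remaining scalar relation. Since the transformation \eqref{eq:vorttrans} maps $\mathcal S^\infty_\Omega$ onto $\mathcal S^\infty_0$ and likewise maps the equation with angular velocity~$\Omega$ to the equation with $\Omega=0$, it suffices to determine the complete point symmetry group~$G_0$ for the case $\Omega=0$ and then pull the result back by \eqref{eq:vorttrans}. So the real work is entirely at $\Omega=0$.

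First I would set up the unknown transformation $\mathcal T\colon(\tilde t,\tilde\lambda,\tilde\mu,\tilde\psi)=(T,\Lambda,\mathrm M,\Psi)$ with nonvanishing Jacobian and invoke the key property recalled in Section~\ref{sec:DiscreteSymmetries}: the push-forward $\mathcal T_*$ must be an automorphism of $\mathfrak g=\mathcal S^\infty_0$ and hence preserve every megaideal. I would then select the economical set of megaideals~\eqref{eq:MegaIdealsForComputation} and translate the invariance conditions $G_*\mathfrak i\subseteq\mathfrak i$ into the explicit system \eqref{eq:MegaidealConstraintForT1}--\eqref{eq:MegaidealConstraintForT4}. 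Equating coefficients of the basis vector fields sequentially, and feeding each new constraint into the next condition, should progressively pin down the components: from $\langle\ZZ(1)\rangle$ one gets $T_\psi=\Lambda_\psi=\mathrm M_\psi=0$, $\Psi_\psi=c\neq0$; from $\langle\ZZ(1),\ZZ(t)\rangle$ that $T$ is affine in $t$ alone; from $\langle\PP,\ZZ(g)\rangle$ that $\Lambda,\mathrm M$ are $t$-independent and $\Psi_t=f(t)$; and from the $\mathfrak{so}(3)$ megaideal that $\Psi_\lambda=\Psi_\mu=0$. Because $\mathrm{Aut}(\mathfrak{so}(3))\cong\mathrm{SO}(3)$ is exhausted by inner automorphisms, the matrix $(b_{ij})$ may be normalized to the identity up to composition with rotations generated by $\JJJ_1,\JJJ_2,\JJJ_3$, reducing \eqref{eq:MegaidealConstraintForT4} to the scalar system \eqref{eq:OpsJConditions1}--\eqref{eq:OpsJConditions2} for $(\Lambda,\mathrm M)$.

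Next I would integrate that scalar system. The relations $\JJJ_1\Lambda=1$, $\JJJ_1 M=0$ force $\Lambda=\lambda+\Upsilon(\mu)$, $M=M(\mu)$; splitting the remaining equations with respect to~$\lambda$ yields $\sin\Upsilon=0$, whence $\Upsilon=\pi k$ with $k\in\mathbb Z$, and then $M=(-1)^k\mu$. At this point the algebraic method is exhausted, and only one parameter relation between $c$ and $a_1$ remains undetermined. The final step is the direct one: substitute the now heavily constrained $\mathcal T$ into the left-hand side of~\eqref{eq:VorticityEquationSphere} with $\Omega=0$, compute the transformed derivatives (easy, since the components are explicit), and observe that the result is proportional to the original expression precisely when $c=(-1)^k/a_1$. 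Writing $\varepsilon=(-1)^k$ and absorbing the $\pi k$ shift of~$\lambda$ into a rotation, this gives the general element of~$G_0$ as $\tilde t=a_1t+a_0$, $\tilde\lambda=\lambda$, $\tilde\mu=\varepsilon\mu$, $\tilde\psi=(\varepsilon/a_1)\psi+f(t)$.

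To conclude, I would identify the connected component of the identity as the subset with $a_1>0$, $\varepsilon=1$, so that the two sign-parameters $(\mathrm{sgn}\,a_1,\varepsilon)\in\{\pm1\}^2$ label the discrete part $G_0/U\cong\mathbb Z_2\times\mathbb Z_2$; representatives of the two independent generators may be taken as $(a_1,\varepsilon)=(-1,1)$ and $(1,-1)$, i.e. the sign changes of $\{t,\psi\}$ and $\{\mu,\psi\}$. Composing and transporting back via~\eqref{eq:vorttrans} recovers the pair stated in the theorem for general~$\Omega$. I expect the main obstacle to be the bookkeeping in the reduction step: one must apply the megaideal conditions in the correct order and be careful that normalizing $(b_{ij})$ to the identity genuinely costs only a rotation (so that no symmetries are lost), and one must verify that \emph{no further} constraints are hidden in \eqref{eq:OpsJConditions1}--\eqref{eq:OpsJConditions2} beyond $\Upsilon=\pi k$, $M=(-1)^k\mu$ before passing to the direct substitution. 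Ensuring the algebraic step is squeezed to its maximum is what keeps the final direct computation tractable.
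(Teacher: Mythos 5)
Your proposal is correct and follows essentially the same route as the paper: the same four megaideals~\eqref{eq:MegaIdealsForComputation}, the same sequential derivation of constraints \eqref{eq:MegaidealConstraintForT1}--\eqref{eq:MegaidealConstraintForT4} with the $(b_{ij})$ matrix normalized to the identity via inner automorphisms of $\mathfrak{so}(3)$, the same integration of \eqref{eq:OpsJConditions1}--\eqref{eq:OpsJConditions2} yielding $\Upsilon=\pi k$, $M=(-1)^k\mu$, and the same final direct substitution fixing $c=(-1)^k/a_1$, followed by transfer to general~$\Omega$ through~\eqref{eq:vorttrans}. Even the points you flag as potential obstacles (ordering of constraints, that normalizing $(b_{ij})$ costs only a rotation, and composing the sign changes to obtain the pair stated in the theorem) are handled exactly as in the paper's argument.
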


\begin{corollary}
The factor group of the complete point symmetry group of the barotropic vorticity equation on the sphere~\eqref{eq:VorticityEquationSphere}
with respect to its connected component of the unity is isomorphic to the group $\mathbb Z_2\times\mathbb Z_2$.
\end{corollary}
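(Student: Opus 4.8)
The plan is to read the factor group directly off the normal form of the elements of~$G_0$ established in the computation preceding the theorem, and then transport the result to arbitrary~$\Omega$. That computation shows that every element of~$G_0$ is, up to composition with a rotation generated by $\langle\JJJ_1,\JJJ_2,\JJJ_3\rangle$, of the form $\tilde t=a_1t+a_0$, $\tilde\lambda=\lambda$, $\tilde\mu=\varepsilon\mu$, $\tilde\psi=(\varepsilon/a_1)\psi+f(t)$ with $a_1\ne0$ and $\varepsilon=\pm1$, and that it lies in the connected component of the unity $U=G_0^{\circ}$ precisely when $a_1>0$ and $\varepsilon=1$. First I would isolate the two discrete data carried by an arbitrary $\mathcal T\in G_0$: the sign $\mathrm{sign}\,a_1$ of the time-scaling coefficient and the reflection index~$\varepsilon$. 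Since at $\Omega=0$ the rotation factor acts only on $(\lambda,\mu)$ and touches neither the variable~$t$ nor the orientation of the flow on the sphere, both quantities are independent of the chosen decomposition $\mathcal T=R\circ\mathcal T_0$ and are therefore genuine functions of~$\mathcal T$.

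Next I would introduce the map $\chi\colon G_0\to\mathbb Z_2\times\mathbb Z_2$, $\chi(\mathcal T)=(\mathrm{sign}\,a_1,\varepsilon)$, with $\mathbb Z_2$ written multiplicatively as $\{\pm1\}$, and verify that it is a homomorphism. The coefficient $a_1=T_t$ is multiplicative under composition, so $\mathrm{sign}\,a_1$ is a character; the index~$\varepsilon$ equals the sign of the Jacobian of the induced map on the sphere, since rotations preserve the orientation whereas $\mu\mapsto-\mu$, $\lambda\mapsto\lambda$ reverses it, so $\varepsilon$ is multiplicative as well. I would then check that $\chi$ is onto, using the two discrete transformations of the theorem as preimages of the generators, and that $\ker\chi=U$: the inclusion $U\subseteq\ker\chi$ holds because $\mathrm{sign}\,a_1$ and~$\varepsilon$ are locally constant and hence constant on the component of the unity, while the reverse inclusion follows from the fact that the subgroup generated by the one-parameter groups of translations, positive scalings, gauge transformations~$\ZZ(g)$ and rotations already exhausts, and is connected within, the set $a_1>0$, $\varepsilon=1$. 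The first isomorphism theorem then yields $G_0/U\cong\operatorname{im}\chi=\mathbb Z_2\times\mathbb Z_2$. Finally, conjugation by the point transformation~\eqref{eq:vorttrans}, which relates the $\Omega=0$ equation to the general equation, carries $G_0$ onto $G_\Omega$ and preserves connected components, so the factor group is the same for arbitrary~$\Omega$.

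I expect the only point requiring care beyond routine verification to be the argument that $\chi$ is a well-defined homomorphism whose image is the Klein four-group rather than a cyclic group of order four. This amounts to confirming that the two binary invariants are logically independent — one detecting reversal of the time orientation and the other reversal of the spatial orientation on the sphere — and that the two generating discrete transformations are commuting involutions whose product is a third involution distinct from both. Establishing that neither discrete symmetry nor their composition is continuously deformable to the identity is exactly what separates $\mathbb Z_2\times\mathbb Z_2$ from~$\mathbb Z_4$, and it is where the orientation interpretation of~$\varepsilon$ and~$\mathrm{sign}\,a_1$ does the essential work.
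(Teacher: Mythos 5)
Your proposal is correct and takes essentially the same route as the paper: the corollary is read off the normal form $\tilde t=a_1t+a_0$, $\tilde\lambda=\lambda$, $\tilde\mu=\varepsilon\mu$, $\tilde\psi=(\varepsilon/a_1)\psi+f(t)$ together with the characterization of the connected component by $a_1>0$, $\varepsilon=1$, and your homomorphism $\chi(\mathcal T)=(\mathrm{sign}\,a_1,\varepsilon)$ with $\ker\chi=U$ is simply a clean formalization of the paper's identification of the two independent commuting sign-change involutions. Your transport to arbitrary~$\Omega$ by conjugation with the transformation~\eqref{eq:vorttrans} likewise matches the paper's argument, and the orientation interpretation of~$\varepsilon$ is a nice, but not essentially different, way of seeing that the two binary invariants are well defined and independent.
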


\section{Conclusion}\label{sec:Conclusion}

In this paper we verified the claim raised in~\cite{bihl09Ay,bihl12Ay} that the barotropic vorticity equation on the sphere possesses two independent (up to composition with each other and with continuous symmetry transformations) discrete symmetries. The computation involved two parts, an algebraic step and a step related to the direct method of finding point symmetries. In view of the structure of the maximal Lie invariance algebra $\mathcal S^\infty_0$, we were able to find a sufficiently large number of megaideals of $\mathcal S^\infty_0$ and then selected those of them that were essential for our consideration, i.e.\ the megaideals~\eqref{eq:MegaIdealsForComputation}. This allowed us to derive important restrictions on the form of point symmetry transformations and therefore strongly economized the remaining computations which were necessary to be carried out using the direct method. We should in particular stress that by taking into account all the constraints that are derivable by the algebraic method, we already obtained a strongly restricted form of the admitted point symmetries. Only a single constraint, which relates the constants $a_1$, $\varepsilon$ and~$c$, could not be found from the transformation behavior of the megaideals and consequently had to be determined using the direct method. As the \sbve\ is a complicated third-order nonlinear partial differential equation in $(1+2)$ variables, not deriving the above restricted form would have rendered it quite problematic to compute the complete point symmetry group using only the direct method.

By Proposition~\ref{pro:WayToFindMegaideals} we also extended the number of possibilities to determine megaideals of Lie algebras. This will be crucial for the computation of the complete point symmetry group of other systems of differential equations as the method we proposed in~\cite{bihl11Cy} and applied in this paper heavily relies on the availability of a large number of megaideals of the associated maximal Lie invariance algebras.

Another novel feature of the present paper is the combining of a simplification of automorphisms by factoring out internal automorphisms as originally proposed in~\cite{hydo00By,hydo00Ay} with the algebraic technique based on megaideals. This is advantageous for the case under consideration as the rotations from ${\rm SO}(3)$ in angular coordinates have a rather cumbersome representation, i.e.\ already the direct integration of the Lie equations associated with elements of $\mathfrak{so}(3)$ is a nontrivial problem. If the calculation of the complete point symmetry group~$G_0$ would be done without factoring out internal automorphisms, the integration of the Lie equations would be implicitly repeated during the computation, which would considerably complicate the calculations within the algebraic method.
As $\mathfrak{so}(3)$ is both a direct summand and a megaideal of $\mathcal S^\infty_0$,
the extension of any automorphism of $\mathfrak{so}(3)$ to the complement of $\mathfrak{so}(3)$ in $\mathcal S^\infty_0$ by identity is an automorphism of $\mathcal S^\infty_0$.
Moreover, any such automorphism of $\mathcal S^\infty_0$ is internal as the automorphism group of $\mathfrak{so}(3)$ coincides with the group of internal automorphisms.
Hence we can easily factor out such automorphisms assuming in the course of the computation that the basis elements $\JJJ_1$, $\JJJ_2$ and $\JJJ_3$ are identically transformed.
Factoring out other internal automorphisms does not essentially simplify the consideration.

To conclude, it often happens that some discrete symmetries of a system of differential equations are known but it is difficult to prove that there are no other discrete symmetries. It will therefore be instructive to test the refined algebraic method for the computation of discrete symmetries as presented in this paper with equations which are known to possess nontrivial discrete symmetries, such as the potential fast diffusion equation $v_t=v_{xx}/v_x$, cf.~\cite{popo07Ay}.

\section*{Acknowledgements}

The authors thank Alexander Bihlo for helpful discussions. 
We appreciate the remarks of the anonymous referees and Professor Thomas Witelski which led to improvements of this paper.
This research was supported by the Austrian Science Fund (FWF), projects P20632 and P23714.


{\footnotesize\itemsep=0.0ex\frenchspacing

}
\end{document}